\def\eqref#1{equation~\ref{#1}}
\def\1{\bm{1}}
\def\va{{\bm{a}}}
\def\vb{{\bm{b}}}
\def\vc{{\bm{c}}}
\def\vf{{\bm{f}}}
\def\vh{{\bm{h}}}
\def\vo{{\bm{o}}}
\def\vq{{\bm{q}}}
\def\vz{{\bm{z}}}
\DeclareMathAlphabet{\mathsfit}{\encodingdefault}{\sfdefault}{m}{sl}
\SetMathAlphabet{\mathsfit}{bold}{\encodingdefault}{\sfdefault}{bx}{n}
\DeclareMathOperator*{\argmax}{arg\,max}
\newtheorem{lemma}{Lemma}
\newcommand{\shorte}{\textup{\texttt{=}}}
\newcommand{\ie}{\textit{i}.\textit{e}.}
\newcommand{\etal}{\textit{et al}.}
\newcommand{\name}{NL-CG}
\newcommand{\Tau}{\mathrm{T}}
\newcolumntype{L}{>{$}l<{$}}
\newcolumntype{C}{>{$}c<{$}}
\newcolumntype{R}{>{$}r<{$}}
\newcolumntype{P}[1]{>{\centering\arraybackslash}p{#1}}
\title{Non-Linear Coordination Graphs}
\author{%
  Yipeng Kang$^*$ \\
  Tsinghua University\\
  \texttt{fringsoo@gmail.com} \\
  \And
  Tonghan Wang \\
  Harvard University \\
  \texttt{twang1@g.harvard.edu} \\
   \And
   Xiaoran Wu \\
   Tsinghua University \\
   \texttt{wuxr17@tsinghua.org.cn} \\
   \AND
    Qianlan Yang \\
  UIUC\\
  \texttt{qianlan2@illinois.edu} \\
   \And
   Chongjie Zhang \\
   Tsinghua University \\
   \texttt{chongjie@tsinghua.edu.cn} \\
}
\begin{document}

\maketitle
\def\thefootnote{*}\footnotetext{Alphabetical order.}

\begin{abstract}
Value decomposition multi-agent reinforcement learning methods learn the global value function as a mixing of each agent's individual utility functions. Coordination graphs (CGs) represent a higher-order decomposition by incorporating pairwise payoff functions and thus is supposed to have a more powerful representational capacity. However, CGs decompose the global value function linearly over local value functions, severely limiting the complexity of the value function class that can be represented. In this paper, we propose the first non-linear coordination graph by extending CG value decomposition beyond the linear case. One major challenge is to conduct greedy action selections in this new function class to which commonly adopted DCOP algorithms are no longer applicable. We study how to solve this problem when mixing networks with LeakyReLU activation are used. An enumeration method with a global optimality guarantee is proposed and motivates an efficient iterative optimization method with a local optimality guarantee. We find that our method can achieve superior performance on challenging multi-agent coordination tasks like MACO.
\end{abstract}

\section{Introduction}

Cooperative multi-agent problems are ubiquitous in real-world applications, such as crewless aerial vehicles~\citep{pham2018cooperative, zhao2018multi} and sensor networks~\citep{zhang2013coordinating}. However, learning control policies for such systems remains a major challenge. Recently, value decomposition methods~\citep{sunehag2018value} significantly push forward the progress of multi-agent reinforcement learning~\citep{rashid2018qmix,son2019qtran,wang2020qplex,wang2020roma,wang2021rode}. In these methods, a centralized action-value function is learned as a mixing of individual utility functions. The mixing function can be conditioned on global states~\citep{rashid2018qmix} while individual utility functions are based on local action-observation history. The advantage is that agents can utilize global information and avoid learning non-stationarity~\citep{tan1993multi} via centralized training, while maintaining scalable decentralized execution.

Notably, the major focus of the value function decomposition research was on full decomposition, where local utility functions are based on actions and observations of a single agent. Full decomposition leads to several drawbacks, such as miscoordination problems in partially observable environments with stochastic transition functions~\citep{wang2020qplex, wang2020learning} and a game-theoretical pathology called relative overgeneralization~\citep{panait2006biasing, bohmer2020deep}. Relative overgeneralization embodies that, due to the concurrent learning and exploration of other agents, the employed utility function may not be able to express optimal decentralized policies and prefer suboptimal actions that give higher returns on average.

% when the local return is influenced significantly by other agents, 

Coordination graph (CG) learning~\citep{guestrin2002coordinated} holds the promise to address these problems while preserving the advantages of value decomposition methods. In a CG, each vertex represents an agent, and each (hyper-) edge stands for a payoff function that is defined on the joint action-observation space of the connected agents. The existence of payoff functions increases the granularity of decomposition, and connected agents explicitly coordinate with each other. As a result, a CG represents a higher-order factorization of the global value function and represents a much larger value function class.

However, coordination graphs typically assume a linear decomposition of the value function among sub-groups of agents, which is too simple to represent credit assignment in complex tasks. In this paper, we solve this long-standing problem and extend CG by introducing non-linear and learnable value decomposition. To our best knowledge, it is the first study on non-linear coordination graphs.

% Learning a more expressive function to mix utility functions of agent groups is a promising direction to extend coordination graphs to complex and realistic problems. The core question is how to extend the DCOP algorithm.

The major challenge against extending CGs beyond linear cases is the calculation of value-maximizing actions. When linearly decomposed, DCOP algorithms~\citep{cheng2012coordinating} can find a globally greedy action via message passing. However, when the mixing function is non-linear, DCOP algorithms are no longer applicable. To address this problem, we develop a DCOP method for non-linear mixing functions represented as a deep network with LeakyReLU (or ReLU) activation. 

The core idea of our non-linear DCOP algorithm is to exploit the fact that these neural networks are equivalent to piece-wise linear functions. For each linear piece, a value-maximizing action can be found by classic DCOP. However, this action may fall out of the domain of the linear piece and is thus infeasible. We first prove that such a \emph{shifted} action indicates a better solution in its domain. Based on this conclusion, we first show how to find a feasible joint action with the global optimal value, and then derive an iterative algorithm with local optimum convergence guarantee.

We demonstrate the improved representational capacity of our Non-Linear Coordination Graphs (\name) on a matrix game by comparing the learned Q functions to those learned by conventional coordination graphs. We then evaluate our method on the Multi-Agent COordination (MACO) Benchmark~\citep{wang2021context} for its high requirements on close inter-agent coordination. The experimental results show the superior performance enabled by the non-linear value decomposition.

\begin{figure*}
    \centering
    \includegraphics[width=0.9\linewidth]{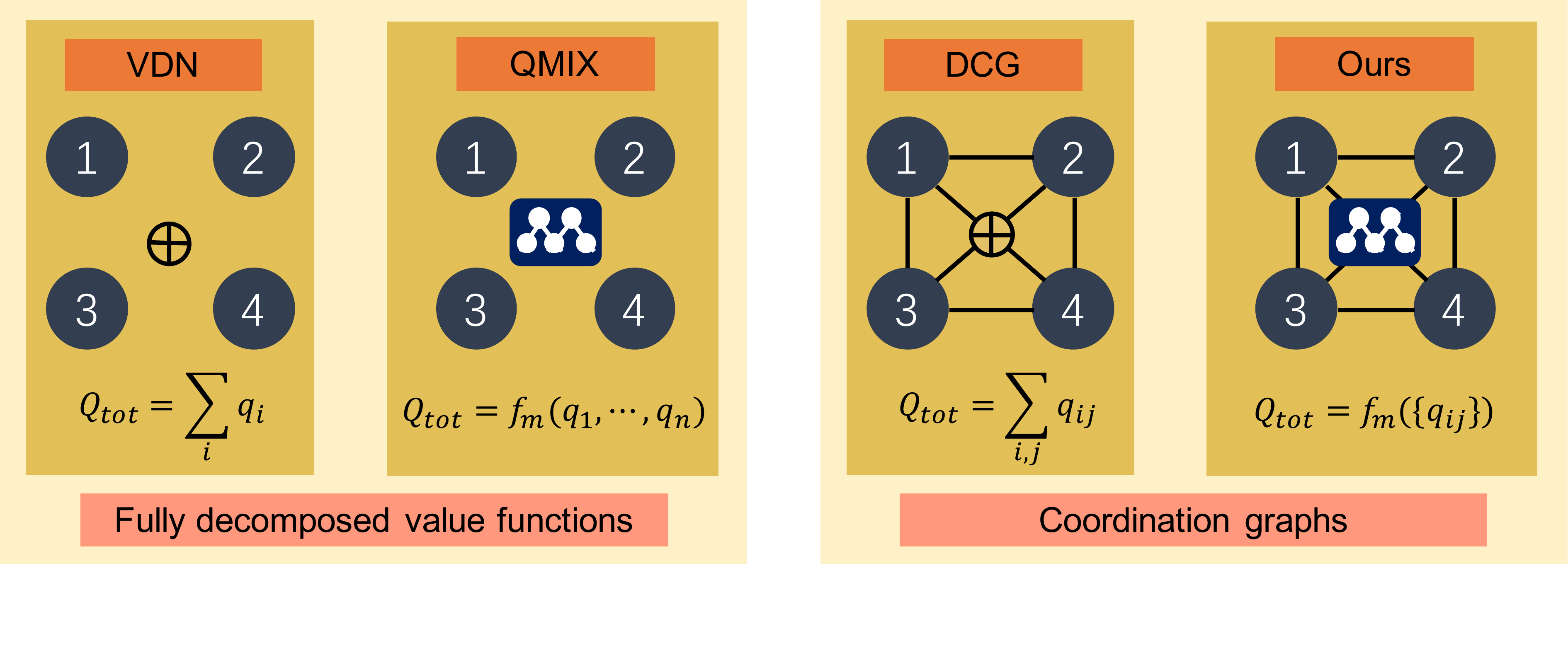}
    \vspace{-2em}
    \caption{Different value decomposition methods. VDN~\citep{sunehag2018value} and QMIX~\citep{rashid2018qmix} represent the global $Q$-function as a linear and monotonic combination of individual utility functions. Conventional coordination graphs (CGs)~\citep{guestrin2002coordinated, bohmer2020deep} learn a linear decomposition of pairwise payoff functions. Our work extends CGs by introducing non-linear combination of payoff functions.}
    \label{fig:overview}
\end{figure*}

\section{Preliminaries}

In this paper, we focus on fully cooperative multi-agent tasks that can be modelled as a \textbf{Dec-POMDP}~\citep{oliehoek2016concise} consisting of a tuple $G\shorte\langle I, S, A, P, R, \Omega, O, n, \gamma\rangle$, where $I$ is the finite set of $n$ agents, $\gamma\in[0, 1)$ is the discount factor, and $s\in S$ is the true state of the environment. At each timestep, each agent $i$ receives an observation $o_i\in \Omega$ drawn according to the observation function $O(s, i)$ and selects an action $a_i\in A$. Individual actions form a joint action $\va$ $\in A^n$, which leads to a next state $s'$ according to the transition function $P(s'|s, \va)$, a reward $r=R(s,\va)$ shared by all agents. Each agent has local action-observation history $\tau_i\in \Tau\equiv(\Omega\times A)^*\times \Omega$. Agents learn to collectively maximize the global return $Q_{tot}(s, \va)=\mathbb{E}_{s_{0:\infty}, a_{0:\infty}}[\sum_{t=0}^{\infty} \gamma^t R(s_t, \va_t) | s_0=s, \va_0=\va]$.

%%%KYP: About PWL, hyperplane arrangement, message passing...
Estimating $Q_{tot}$ is at the core of multi-agent reinforcement learning (MARL)~\citep{sunehag2018value,son2019qtran,lowe2017multi,wang2021off,li2021celebrating}. The complexity of calculating $\max Q_{tot}$ grows exponentially ($|A|^n$) with the number of agents~\cite{mahajan2019maven,wang2020influence}. To solve this problem, value-based MARL decomposes the global action-value function into local utility functions $q_i$ and guarantees that the global maximizer can be obtained locally: $\argmax_\va Q_{tot}(s,\va) = \left(\argmax_{a_1} q_1(\tau_1,a_1),\dots,\argmax_{a_n} q_n(\tau_n,a_n)\right)^\Tau$.
Value decomposition network (VDN)~\cite{sunehag2018value} satisfies this condition by learning the global value function as a summation of local utilities (Fig.~\ref{fig:overview}-left). QMIX~\citep{rashid2018qmix} extends the function class of VDN by learning $Q_{tot}$ as a monotonic mixing of local utilities. The mixing function is parameterized so that: $\frac{\partial Q_{tot}(s,\va)}{\partial q_i(\tau_i, a_i)} \ge 0$.

\subsection{Coordination Graphs}
For fully decomposed value functions, local utility functions are conditioned on local action-observation history. Coordination Graphs (CGs)~\citep{guestrin2002coordinated} increase the representational capacity of fully decomposed value functions by introducing higher-order payoff functions. Specifically, a coordination graph is a tuple of a vertex set and an edge set: $\mathcal{G}=\langle\mathcal{V}, \mathcal{E}\rangle$. Each vertex $v_i \in \mathcal{V}$ represents an agent $i$, and (hyper-) edges in $\mathcal{E}$ represent coordination dependencies among agents. In previous work, the global value functions are decomposed linearly based on the graph topology:
\begin{equation}
    Q_{tot}(\bm\tau, \va) = \frac{1}{|\mathcal{V}|}\sum_i{q_i(\tau_i, a_i)} + \frac{1}{|\mathcal{E}|}\sum_{\{i,j\}\in\mathcal{E}} q_{ij}(\bm\tau_{ij},\va_{ij}),
    \label{equ:ccg_q_tot}
\end{equation}
where $q_i$ and $q_{ij}$ is \emph{utility} functions for individual agents and pairwise \emph{payoff} functions, respectively. $\bm\tau_{ij}=\langle\tau_i,\tau_j\rangle$ and $\va_{ij}=\langle a_i,a_j\rangle$ is the joint action-observation history and action of agent $i$ and $j$.

Previous work studies different aspects of such coordination graphs. It is shown that higher-order factorization is important on tackling an exponential number of joint actions~\cite{castellini2019representational}. Sparse cooperative Q-learning~\citep{kok2006collaborative} learns value functions for sparse coordination graphs with pre-defined and static topology. Zhang \etal~\cite{zhang2013coordinating} propose to learn minimized dynamic coordination sets for each agent. DCG~\citep{bohmer2020deep} incorporates deep function approximation and parameter sharing into coordination graphs and scales to large state-action spaces. CASEC~\citep{wang2021context} and SOP-CG~\citep{yang2022self} studies how to build sparse deep coordination graphs that are adaptive to the dynamic coordination requirements. In all these works, the global value function is represented as a summation of local value functions. To the best of our knowledge, this paper presents the fist CG learning method with non-linear value decomposition.

\subsection{Message Passing}

Within a coordination graph, the greedy action selection required by Q-learning can not be completed by simply computing the maximum of individual utility and payoff functions. Instead, distributed constraint optimization (DCOP)~\citep{cheng2012coordinating} techniques are used. \textbf{Max-Sum}~\citep{stranders2009decentralised}  is a popular implementation of DCOP. Max-Sum finds near-optimal actions on a CG $\mathcal{G}=\langle\mathcal{V}, \mathcal{E}\rangle$ via multi-round message passing on a bipartite graph $\mathcal{G}_m=\langle \mathcal{V}_a, \mathcal{V}_q, \mathcal{E}_m \rangle$. Each node $i\in\mathcal{V}_a$ represents an agent, and each node $g\in\mathcal{V}_q$ represents a utility ($q_i$) or payoff ($q_{ij}$) function. Edges in $\mathcal{E}_m$ connect $g$ with the associated agent node(s). Message passing starts with sending messages from node $i\in\mathcal{V}_a$ to node $g\in\mathcal{V}_q$:
\begin{equation}
m_{i \rightarrow g}\left(a_{i}\right)=\sum_{h \in \mathcal{F}_{i} \backslash g} m_{h \rightarrow i}\left(a_{i}\right)+c_{i g},
\label{equ:3}
\end{equation}
where $\mathcal{F}_{i}$ is the set of nodes in $\mathcal{V}_q$ connected to node $i$, and $c_{i g}$ is a normalizing factor preventing the value of messages from growing arbitrarily large. Messages are then sent from node $g$ to node $i$:
\begin{equation}
m_{g \rightarrow i}\left(a_{i}\right)=\max _{\va_{g} \backslash a_{i}}[q\left(\va_{g}\right)+\sum_{h \in \mathcal{V}_{g} \backslash i} m_{h \rightarrow g}\left(a_{h}\right)],
\label{equ:4}
\end{equation}
where $\mathcal{V}_{g}$ is the set of nodes in $\mathcal{V}_a$ connected to node $g$, $\va_{g}\shorte\left\{a_{h}| h \in \mathcal{V}_{g}\right\}$, $\va_{g} \backslash a_{i}\shorte\left\{a_{h}| h \in \mathcal{V}_{g} \backslash \{i\}\right\}$, and $q$ represents utility or payoff functions conditioned on $\va_{g}$. Eq.~\ref{equ:3} and~\ref{equ:4} make up an iteration of message passing. Each agent $i$ can find its local optimal action by calculating $a_{i}^{*}=\operatorname{argmax}_{a_{i}}\sum_{h \in \mathcal{F}_{i}} m_{h \rightarrow i}\left(a_{i}\right)$ after several iterations of message passing. Notably, Max-Sum and other DCOP algorithms are only applicable to linearly decomposed value functions.

\subsection{Piece-Wise Linear Neural Networks}

Following the recent decade's success of deep neural networks (DNNs), analysis works have been done trying to explain the mechanism of the DNN black-boxes and assess their function approximation capabilities. One conclusion is that a DNN with piece-wise linear (PWL) activation functions (e.g. ReLU, LeakyReLU, PReLU) is equivalent to a PWL function. This kind of DNNs are called piece-wise linear neural networks (PLNNs) \citep{chu2018exact}. Early papers \citep{montufar2014number, pascanu2013number} assess the expressivity of PLNNs by the amount of linear pieces. \cite{arora2018understanding, serra2018bounding, hanin2019complexity} give more theoretically grounded results about the upper and lower bounds for the amount of pieces. In this sense, it is shown that increasing the depth of a network can generally be exponentially more valuable than increasing the width~\citep{pascanu2013number, eldan2016power, arora2018understanding}. Chu \etal~\cite{chu2018exact} propose OPENBOX to compute the mathematically equivalent set of linear pieces, which provides an accurate and consistent interpretation of PLNNs. For neural networks with a single layer of hidden nodes, the problem can be reduced to hyper-plane arrangement~\citep{zaslavsky1975facing}, and linear functions and their domain can be enumerated efficiently~\citep{rada2018new, avis1996reverse}.

\section{Method}

Our method extends the representational capability of coordination graphs by introducing non-linear factorization of the global value function. Specifically, for a coordination graph $\mathcal{G}=\langle\mathcal{V}, \mathcal{E}\rangle$ (we study complete coordination graphs in this paper), we decompose the global Q as:
\begin{align}
    Q_{tot}(s,\va) =  f_n\left(\vq_i, \vq_{ij}\right),\label{equ:our_q_tot}
\end{align}
where $\vq_i$ is the vector of all individual utilities, and $\vq_{ij}$ is the vector of all pairwise payoffs for edges in $\mathcal{E}$. Similar to DCG~\citep{bohmer2020deep}, we learn a shared utility function $f^{v}$, parameterized by $\theta^v$, and get the individual utility $q_i(\tau_i, a_i) = f^v(a_i|\tau_i; \theta^v)$. The payoffs are estimated by a shared function $f^e$ parameterized by $\theta^e$: $q_{ij}(\tau_i, \tau_j, a_i, a_j) = f^{e}(a_i, a_j |\tau_i,\tau_j;\theta^e)$.

Different from conventional CGs (Eq.~\ref{equ:ccg_q_tot}), our $f_n$ is a non-linear mixing network whose parameters are generated by a hypernet $f^h$ conditioned on the global state $s$ and parameterized by $\theta^h$. Our discussion is based on LeakyReLU networks with the slope coefficient $\alpha \in [0,1]$. For efficiently calculating the maximizer of $Q_{tot}$, we require the weights after the first layer of the mixing network to be non-negative. The reason for this non-negativity constraint will be discussed in Lemma~\ref{lemma:1} and~\ref{lemma:global_optimum}. Such a mixing network is effectively a type of input convex neural networks (ICNN~\cite{amos2017input}). The non-negativity constraint on parameters is somewhat constraint, but we can use the passthrough trick introduced in Proposition 1 of~\cite{amos2017input} to maintain substantial representation power of the mixing network.

The whole framework, including the utility function $f^v$, the payoff function $f^e$, and the hypernet $f^h$, is updated by minimizing the TD loss: 
\begin{equation}
    \mathcal{L}_{\text{TD}}(\theta^v,\theta^e,\theta^h) = \mathbb{E}_{(s,\va,r,s')\sim\mathcal{D}}\left[(Q_{tot}(s,\va)- (r+\max_{\va'} \hat{Q}_{tot}(s',\va')))^2\right],
\end{equation}
where $\mathcal{D}$ is the replay buffer, and $\hat{Q}_{tot}$ is a target function whose parameters are periodically copied from the function $Q_{tot}$. 

% In this way, the global value is monotonically factorized~\citep{rashid2018qmix} into higher-order local value functions.

% \textcolor{blue}
% {For efficiently calculating the maximizer of $Q_{tot}$, we require the second layer of the mixing network to be non-negative: $\bm{W_1} \in \mathbb{R}_{+}^{d}$. This is because:}

% \begin{lemma}
% Any mixing network with one hidden layer and LeakyReLU activation can be equivalently expressed by a non-negative weighted second layer LeakyReLU network.
%     \label{lemma:0}
% \end{lemma}
% \begin{proof}
% Denote $\vo=\bm{W}_0^T\vq+\bm{b_0}$. Then $\rho_{r}(\vq) = \bm{W_1}^\Tau(\vc_r\circ \vo)+ b_1$. Note that $c_r$ is $\alpha < 1$ when $o$ is negative and is $1$ when $o$ is positive. We have that, when $\bm{W_1}\ge 0$: 

% \hfill $\rho_r(\vq) \geq \bm{W_1}^\Tau(\vc_{s\neq r}\circ \vq)+b_1 = \rho_{s\neq r}(\vq).$ \hfill
% \end{proof}

The main challenge left, for both the training and execution phase of deep Q-learning, is to select actions that maximize the global Q value at each time step, $\arg\max_\va Q_{tot}(\bm\tau, \va)$. This problem is theoretically NP-hard when the mixing function is non-linear. In the following sections, we first provide a global optimal algorithm for this problem. The complexity of this algorithm grows polynomially with the number of hidden units in the mixing network. To reduce time complexity, we propose an iterative algorithm with local optimum convergence guarantee. 
% While we could optimize the output of each hidden node using the message-passing algorithm as in \cite{bohmer2020deep}, optimizing their weighted sum is even more complicated.

We use the following \textbf{notations}. The input to the mixing network is $\vq=[\vq_i\| \vq_{ij}]$, where $[\cdot\|\cdot ]$ means vector concatenation. We use $\vq(\va)$ to denote the utilities and payoffs corresponding to action $\va$. The mixing network has $L$ LeakyReLU linear layers. $d=|\mathcal V|+|\mathcal E|$ is the input dimension. $m_i$, $\mathbf{W}_i$, and $\vb_i\in \mathbb{R}^{m_i}$ are the width, weights, and biases of the $i$th layer. The weights after the first layer are non-negative. The input to the activation units at $i$th layer is $\vz_i$, and the corresponding output is $\vh_i = \mathtt{LeakyReLU}(\vz_i) = \vc_i\circ \vo_i$, where $\vc_i$ is the value of LeakyReLU activation ($c= \alpha$ when $z<0$ and $c= 1$ when $z\ge 0$) and $\circ$ is the element-wise multiplication. We call $\vc = [\vc_1\| \dots\| \vc_L] \in \{\alpha, 1\}^{m}$, $m=\sum_i m_i$, a \emph{slope configuration} of the mixing network. Subscripts of $\mathbf{W}_i$, $\vb_i$, $\vz_i$, and $\vh_i$ means index. For example, $z_{ij}$ is the $j$th element of $\vz_i$, and $\bm W_{ij}$ is the $j$th row of $\mathbf{W}_i$.

% The output of the first linear layer in this network is $\vo = \bm{W_0}^\Tau \vq+\vb_0$.  : $\bm{W}_0 \in \mathbb{R}^{d \times m}, $. The biases of the mixing network are denoted by $\bm{b_0} \in $ and $b_1 \in \mathbb{R}$. 

\subsection{Piece-Wise Optimization}

The general idea of our action selection algorithm is to utilize the piece-wise linear property of the LeakyReLU network. Given a slope configuration $\vc$, the mixing network becomes linear, and we can run a DCOP algorithm to get the corresponding maximizer of $Q_{tot}$. The question is that the obtained solution may be out of the domain of the linear piece. We show that, when the weights after the first layer are non-negative, we can ignore whether the obtained local optimal solution is in the piece's domain, and the global maximizer of $Q_{tot}$ is the best of the local optima.

Formally, there are $2^m$ slope configurations in $\mathcal{C}_{all}=\{\vc | \vc\in \{\alpha,1\}^m\}$, each of which makes the mixing network an affine function in $\mathcal{P}_{all}=\{\rho_1, \rho_2, \ldots, \rho_{2^m}\}$. Each affine function $\rho_k$ has a corresponding cell $P_k$ where $\vq\in P_k$ yields $\vc^k$ in the forward pass. For each $\vc^k\in\mathcal{C}_{all}$, by running DCOP, we obtain the local maximum $q_k$ and the corresponding joint action $\va_k$ and utilities/payoffs $\vq_k$. However, it is possible that $\vq_k$ falls out of $P_k$ and indeed yields another slope configuration $\vc^{r\ne k}$. We first show that such a \emph{shifted} solution indicates an equal or better solution in its domain.
% Max-Sum (Algorithm \ref{alg:MAX-SUM}, this weighted version is similar to the original Max-Sum)
\begin{lemma}
Denote affine function pieces and their cells of a fully-connected feedforward mixing network with LeakyReLU activation as $\mathcal{P}_{all} = \{\rho_j\}_{1}^{2^m}$ and $\{P_j\}_{1}^{2^m}$. For $\vq$ in the cell of the $r$th piece, $P_r$, and $\forall \rho_{s} \in \mathcal{P}_{all}$, we have $\rho_r(\vq) \geq \rho_{s}(\vq)$.
    \label{lemma:1}
\end{lemma}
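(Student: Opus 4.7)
The plan is to prove the claim by induction on the layers of the mixing network, leveraging one pointwise inequality together with the nonnegativity of the weights after the first layer. The key observation is that for every $z\in\mathbb{R}$ and every $c\in\{\alpha,1\}$ we have $\mathtt{LeakyReLU}(z)=\max(z,\alpha z)\ge c z$, since LeakyReLU selects the larger of the two candidate slopes at $z$. In other words, honestly applying LeakyReLU pointwise dominates forcing any fixed slope.

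I would introduce two parallel evaluations at the same input $\vq\in P_r$. Let $\vh_l^{(r)}$ be the true $l$-th layer output, whose activations automatically realize the configuration $\vc^r$, and let $\vh_l^{(s)}$ be the output obtained by forcibly using the slopes prescribed by $\vc^s$ at every layer, so that the end-to-end computation collapses to the affine piece $\rho_s$ evaluated at $\vq$. Base case: at layer $1$, the preactivation $\vz_1=\mathbf{W}_1\vq+\vb_1$ coincides in both evaluations, so the pointwise inequality immediately yields $\vh_1^{(r)}\ge \vh_1^{(s)}$ elementwise.

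For the inductive step, assume $\vh_{l-1}^{(r)}\ge \vh_{l-1}^{(s)}$ elementwise. Then
\[
\vz_l^{(r)}-\vz_l^{(s)} \;=\; \mathbf{W}_l\bigl(\vh_{l-1}^{(r)}-\vh_{l-1}^{(s)}\bigr) \;\ge\; \vzero,
\]
because $\mathbf{W}_l\ge 0$ for $l\ge 2$ by hypothesis; this is precisely where the setup's nonnegativity assumption on the weights after the first layer is used. Since LeakyReLU with $\alpha\in[0,1]$ is monotone nondecreasing, and using the pointwise inequality once more,
\[
\vh_l^{(r)}=\mathtt{LeakyReLU}\bigl(\vz_l^{(r)}\bigr)\ge \mathtt{LeakyReLU}\bigl(\vz_l^{(s)}\bigr)\ge \vc_l^s\circ \vz_l^{(s)}=\vh_l^{(s)}.
\]
Propagating through all $L$ layers (and through any final nonnegative linear readout producing the scalar output) then gives $\rho_r(\vq)\ge \rho_s(\vq)$.

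There is no deep obstacle. The main conceptual care is to cleanly distinguish the true forward pass at $\vq\in P_r$, whose slope pattern is $\vc^r$, from the forced pass using $\vc^s$, since the two computations produce different preactivations at every layer beyond the first. The nonnegativity of $\mathbf{W}_l$ for $l\ge 2$ is essential: without it, the elementwise inequality at hidden units need not survive the next matrix multiplication and the dominance would fail. This is exactly why the setup imposes that structural constraint (making the mixing network an ICNN), and it is what will let Lemma~\ref{lemma:1} imply that the best of the $2^m$ per-piece DCOP solutions is the global maximizer of $Q_{tot}$.
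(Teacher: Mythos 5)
Your proof is correct and takes essentially the same route as the paper's: both arguments rest on the pointwise fact that the honestly applied LeakyReLU output $\max(z,\alpha z)$ dominates $cz$ for any forced slope $c\in\{\alpha,1\}$, propagated layer by layer using the nonnegativity of the weights after the first layer together with monotonicity. The only cosmetic difference is that you organize this as a clean induction with an explicit base case at layer 1, whereas the paper phrases it as starting from the first coordinate where the two slope configurations differ and then repeating the argument for subsequent layers.
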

\begin{proof}
We start with the \emph{first} difference between $\vc^r$ and $\vc^s$. We denote it as $c^r_{ij}\ne c^s_{ij}$. Because this is the first difference, the input to this unit, $z_{ij}$, is the same. Since $\vq$ is in $P_r$, we have that $c^r_{ij}$ is $1$ when $z_{ij}$ is positive and is $\alpha<1$ when $z_{ij}$ is negative. Therefore, the output satisfies that
\begin{equation}
h^r_{ij} = c^r_{ij} z_{ij} \ge c^s_{ij} z_{ij} = h^s_{ij}.
\end{equation}
It follows that 
\begin{equation}
\vz_{i+1}^r = \mathbf{W}_{i+1}^\Tau \vh^r_i + \vb_{i+1} \ge \mathbf{W}_{i+1}^\Tau \vh^s_i + \vb_{i+1} = \vz_{i+1}^s,\label{equ:next_layer_linear}
\end{equation}
because $\mathbf{W}_{i+1}\ge 0$. Other differences in the $i$th layer will lead to the same conclusion. At layer $i+1$, we have 
\begin{equation}
    \vh^r_{i+1} = \vc^r_{i+1}\circ \vz^r_{i+1} \ge \vc^s_{i+1}\circ \vz^r_{i+1} \ge \vc^s_{i+1}\circ \vz^s_{i+1}=\vh^s_{i+1}.
\end{equation}
The first inequity is because $\vq$ is in the cell of $\rho_r$. The second inequity is because of Eq.~\ref{equ:next_layer_linear} and that LeakyReLU is a monotonic increasing function. The proof can repeat for the following layers and thus holds for the last layer, \ie, $\rho_r(\vq) \geq \rho_{s}(\vq)$.
% Denote $\vo=\bm{W}_0^T\vq+\bm{b_0}$. Then $\rho_{r}(\vq) = \bm{W_1}^\Tau(\vc_r\circ \vo)+ b_1$. Note that $c_r$ is $\alpha < 1$ when $o$ is negative and is $1$ when $o$ is positive. We have that, when $\bm{W_1}\ge 0$: 
% \hfill $\rho_r(\vq) \geq \bm{W_1}^\Tau(\vc_{s\neq r}\circ \vq)+b_1 = \rho_{s\neq r}(\vq).$ \hfill
\end{proof}
% \begin{equation}
% Q_{tot}(\vec{f}) = \rho_{r}(\vec{f}) = \bm{W_1}(\vec{c_r}\circ \vec{z})+ b_1 \geq \bm{W_1}(\vec{c_{s\neq r}}\circ \vec{z})+b_1 = \rho_{s\neq r}(\vec{f})
% \end{equation}
%For each element in $\vec{z}$, if it is greater than zero, then the corresponding element in $\vec{c_i}$ equals to 1, otherwise $\alpha$. So $\vec{c_i}\circ \vec{z}$ is element-wise greater than or equal to $\vec{c_{j\neq i}}\circ \vec{z}$.
Based on Lemma~\ref{lemma:1}, when the local optimal solution, $\va_k$, of piece $k$ returned by DCOP actually falls in cell $P_{r\ne k}$, we have $\rho_r(\vq(\va_k)) \geq \rho_k(\vq(\va_k))$. This indicates that, on piece $r$, we will get a solution at least as good as the one on piece $k$. Based on this conclusion, we now prove that the maximum of each piece's local optimum is the global optimum.
\begin{lemma}
Running DCOP algorithm for each linear piece in $\mathcal{P}_{all}$, then take the maximum value among these pieces, we can get the global optimizer of $Q_{tot}$.
\label{lemma:global_optimum}
\end{lemma}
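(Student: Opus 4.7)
The plan is to sandwich $\max_k q_k$ between $\max_{\va} Q_{tot}(s,\va)$ from both sides, using Lemma~\ref{lemma:1} to handle the fact that DCOP may return an action whose utility/payoff vector lies outside the cell of its intended piece. Throughout, fix a state $s$, let $q_k \shorte \rho_k(\vq(\va_k))$ be the value returned by DCOP on piece $k$, and let $\va^*$ be a true global maximizer, with $Q_{tot}(s,\va^*) \shorte \max_\va Q_{tot}(s,\va)$.

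First I would prove the lower bound $\max_k q_k \ge Q_{tot}(s,\va^*)$. Since the cells $\{P_k\}$ partition the input domain of the mixing network, there exists an index $k^*$ with $\vq(\va^*) \in P_{k^*}$, so that $Q_{tot}(s,\va^*) = \rho_{k^*}(\vq(\va^*))$. By definition $\va_{k^*}$ is the DCOP-maximizer of the affine function $\rho_{k^*}$ composed with the utility/payoff mapping, so $q_{k^*} = \rho_{k^*}(\vq(\va_{k^*})) \ge \rho_{k^*}(\vq(\va^*)) = Q_{tot}(s,\va^*)$, which yields the desired inequality.

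Next I would prove the reverse bound $\max_k q_k \le Q_{tot}(s,\va^*)$, and this is the step where Lemma~\ref{lemma:1} carries the weight. Fix any $k$ and let $r$ be the index with $\vq(\va_k) \in P_r$ (possibly $r \ne k$, the shifted case). Applying Lemma~\ref{lemma:1} to the point $\vq(\va_k) \in P_r$ and the piece $\rho_k$ gives $\rho_r(\vq(\va_k)) \ge \rho_k(\vq(\va_k)) = q_k$. But $\rho_r(\vq(\va_k)) = Q_{tot}(s,\va_k)$ by definition of the piece containing $\vq(\va_k)$, and $Q_{tot}(s,\va_k) \le Q_{tot}(s,\va^*)$ by optimality of $\va^*$. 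Chaining these yields $q_k \le Q_{tot}(s,\va^*)$, so taking a max over $k$ completes the sandwich and hence $\max_k q_k = Q_{tot}(s,\va^*)$. I would then close with a brief remark that if $k^\dagger \in \argmax_k q_k$, the same chain of inequalities applied at $k^\dagger$ forces $Q_{tot}(s,\va_{k^\dagger}) = Q_{tot}(s,\va^*)$, so $\va_{k^\dagger}$ is an actual argmax, not merely a witness of the correct value.

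The conceptual obstacle is entirely in the reverse direction: a priori DCOP on $\rho_k$ may return a point outside $P_k$, and one might worry this inflates $q_k$ beyond any achievable $Q_{tot}$ value. Lemma~\ref{lemma:1} resolves this exactly, because it says the ``true'' piece $\rho_r$ at $\vq(\va_k)$ already dominates the overshooting surrogate $\rho_k$, so the inflated $q_k$ is still bounded by a genuine $Q_{tot}$ value. Once this observation is in place the remainder is just the standard partition-and-maximize argument, and no further calculation is needed.
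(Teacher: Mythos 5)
Your proof is correct and follows essentially the same route as the paper's: the paper's one-line identity $\max_{\vq}f_n(\vq)=\max_{\rho_p\in\mathcal{P}_{all}}\max_{\vq}\rho_p(\vq)$ (which rests on Lemma~\ref{lemma:1} giving $f_n=\max_p\rho_p$) is exactly your two-sided sandwich unpacked, and your closing remark about $\va_{k^\dagger}$ being a genuine argmax is the paper's feasibility argument for the shifted solution. Your write-up is somewhat more explicit about which direction actually needs Lemma~\ref{lemma:1}, but there is no substantive difference in the argument.
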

\begin{proof}
We have that $\max_{\vq}f_n(\vq)
    =\max_{\vq}\max_{\rho_p\in\mathcal{P}_{all}}\rho_{p}(\vq)
    =\max_{\rho_p\in\mathcal{P}_{all}}\max_{\vq}\rho_{p}(\vq)$ ($f_n$ is the mixing function, Eq.~\ref{equ:our_q_tot}), which means the maximum of local optimal values is the global optimal value. Suppose the global optimal value $Q_{max}$ is found on piece $r$ and the corresponding action is $\va_r$, the question is whether $\vq(\va_r)$ is feasible, \ie, $\vq(\va_r)\in P_r$. Assume that $\vq(\va_r)$ falls in $P_{s\ne r}$. Then for piece $s$, $\va_r$ is a feasible solution with the value $Q_{max}$ (due to Lemma~\ref{lemma:1}). This means the global optimal solution is always a feasible solution.
\end{proof}
According to Lemma~\ref{lemma:global_optimum}, we can run message passing on each piece (Algorithm~\ref{alg:MAX-SUM} in Appx.~\ref{appx:w-max-sum}) and then take the best of these results to get the global optimum. Detailed process can be found in Algorithm~\ref{alg:PIECE-ENUMERATE-ACTION-SELECTION}.

The problem of Algorithm~\ref{alg:PIECE-ENUMERATE-ACTION-SELECTION} is time complexity. When $m$ is small, we enumerate all $2^m$ slope configurations. For a large $m$, we can use the hyperplane arrangement algorithm~\citep{chu2018exact} to calculate exact linear pieces and their domains. Specifically, there are 
\begin{equation}
    n_{m,d} = \sum_{i=0}^d \begin{pmatrix}m\\d-i
\end{pmatrix}
\end{equation}
pieces need enumerating, where $d$ is the input length. This number is exponential to $d$. To reduce time complexity, we propose an iterative optimization method in the next section.
% In this case, most configurations in $\mathcal{C}_{all}$ is invalid and will not appear.
% \begin{lemma}
% If $\mathcal{P}_T=\{\rho_1, \rho_2, ..., \rho_T\}$ satisfies $\mathcal{P} \subseteq \mathcal{P}_T \subseteq \mathcal{P}_{all}$, then QMIX is equivalent to the maximum of affine functions in $\mathcal{P}_T$.
%     \begin{equation}
%         Q_{tot}(\vec{f})
%         =\max_{\mathcal{P}_T}\rho_{p}(\vec{f})
%     \end{equation}
%     \label{lemma:2}
% \end{lemma}
% \begin{proof}
% For an input $\vec{f}$, $Q_{tot}(\vec{f})$ is equivalent to some affine function $\rho(\vec{f})$. $\mathcal{P}$. Obviously $\rho \in \mathcal{P}$ so $\rho \in \mathcal{P}_T$. If $\rho(\vec{f}) \neq \max_{\mathcal{P}_T}\rho_{p}(\vec{f})$, then $\rho(\vec{f}) \textless \max_{\mathcal{P}_T}\rho_{p}(\vec{f})$, which contradicts with Lemma 1.
% \end{proof}
\subsection{Iterative Optimization with Local Optimum Guarantee}
When $m$ is large, enumerating either $2^m$ or $n_{m,d}$ pieces can be costly. We thus propose an iterative algorithm with \emph{local optimum guarantee} (Algorithm~\ref{alg:HEURISTIC-ACTION-SELECTION}) to get approximate solutions. We begin with the slope configuration where all hidden units are activated ($\vc=\{1\}^m$). We run message passing on the current configuration $\vc_p$ and get a local optimum solution $\va_p$. If the real configuration $\vc_{real}$ of $\va_p$ is not $\vc_p$, we calculate the local optimum of $\vc_{real}$. The iteration continues until $\vc_{real}=\vc_{p}$.

Lemma~\ref{lemma:1} guarantees that the solution values in the iteration are monotonically increasing. Since there is a finite number of configurations, our algorithm can converge to a local optimum.

With a very low probability, a loop would appear in the iteration. The appearance of a loop means that the equity in Lemma~\ref{lemma:1} holds. In this case, the solutions on the loop have the same value, thus we can stop our iteration when encountering a slope configuration that has been searched.

\begin{minipage}{0.48\linewidth}
\begin{algorithm}[H]
\caption{ENUMERATE-OPTIMIZATION\\/*Show the case for a two-layer mixing network, but can be easily extended to more layers.*/}
\label{alg:PIECE-ENUMERATE-ACTION-SELECTION}
\begin{algorithmic}
	\STATE {\bfseries Input:} $\bm{f}^\text{V}\in \mathbb{R}^{|\mathcal V| \times A}$, $\bm{f}^\text{E}\in \mathbb{R}^{|\mathcal E| \times A \times A}$, $\mathbf{W}_0$, $\bm{b}_0$, $\bm{W}_1$, $b_1$
	
        % \COMMENT{$A := |\cup_i \mathcal A^i|$}
 	
 	\STATE $q_\text{max} := -\infty$; $\bm{a}_\text{max} := \big[\big]$
 	
 	    \COMMENT{Initialize the best found solution.}
 	
	\FOR{ $\bm{c}_p \;\in\; \{\alpha,1\}^m$}
	   % \STATE  \COMMENT {($O(2^m)$}
	    \STATE \COMMENT {Calculate the equivalent weights and biases.}
	    \STATE $\bm{W}_{\rho_p} := \mathbf{W}_0 \cdot (\bm{c}_p \circ \bm{W}_1)$
	    
	    \STATE $b_{\rho_p} := (\bm{c}_p \circ \bm{W}_1) \cdot \bm{b}_0^T + b1$
	
        \STATE $q, \bm{a}, \cdot \leftarrow$
        
        \STATE $\text{\sc $w$-MAX-SUM}(\bm{f}^\text{V}, \bm{f}^\text{E}, \bm{W}_{\rho_p}, b_{\rho_p})$

 		\IF{$q > q_\text{max}$}
 		
     		\STATE $\bm{a}_\text{max} \leftarrow \bm{a}$
     		
     		\STATE $q_\text{max} \leftarrow q$
    		
    		    \COMMENT{Remember only the best actions.}
    		
		\ENDIF
		
	\ENDFOR
	
	\RETURN $\bm{a_\text{max}}$ 
	    
	    \COMMENT{Return the selected joint action $\bm{a_\text{max}}$.}
	
\end{algorithmic}
\end{algorithm}

\end{minipage}
\hfill
\begin{minipage}{0.5\linewidth}
\begin{algorithm}[H]
\caption{ITERATIVE-OPTIMIZATION\\/*Show the case for a two-layer mixing network, but can be easily extended to more layers.*/}
\label{alg:HEURISTIC-ACTION-SELECTION}
\begin{algorithmic}
	\STATE {\bfseries Input:} $\bm{f}^\text{V}\in \mathbb{R}^{|\mathcal V| \times A}$, $\bm{f}^\text{E}\in \mathbb{R}^{|\mathcal E| \times A \times A}$, $\mathbf{W}_0$, $\bm{b}_0$, $\bm{W}_1$, $b_1, n_{max}$, $\epsilon$ % \COMMENT{$n_{max}$: the amount of configuration to be tried}
 	
 	\STATE $q_\text{max} := -\infty$, $\bm{a}_\text{max} := \big[\big]$, $\bm{c}_p := \{1\}^m$
	
	\FOR{$n \;\in\; \{1, \ldots, n_{max}\}$}
	    \STATE  \COMMENT {Try one configuration at a time.}
	    
	    \STATE $\bm{W}_{\rho_p} := \mathbf{W}_0 \cdot (\bm{c}_p \circ \bm{W}_1)$
	    
	    \STATE $b_{\rho_p} := (\bm{c}_p \circ \bm{W}_1) \cdot \bm{b}_0^T + b1$
	
        \STATE $q, \bm{a}, \cdot \leftarrow$ 
        
        \STATE $\text{\sc $w$-MAX-SUM}(\bm{f}^\text{V}, \bm{f}^\text{E}, \bm{W}_{\rho_p}, b_{\rho_p})$

 		\IF{$q > q_\text{max}$}
 		
     		\STATE $\bm{a}_\text{max} \leftarrow \bm{a}$
     		
     		\STATE $q_\text{max} \leftarrow q$
    		
    		\STATE $\bm{c}_{real} \leftarrow$ The real LeakyReLU slope configuration
    		
		\ENDIF
		
		\textbf{if} $\bm{c}_{real} \neq \bm{c}_{p}$ \textbf{then} $\bm{c}_{p} \leftarrow \bm{c}_{real}$ \textbf{else} with prob. $\epsilon$ \textbf{break} or continue with an unvisited $\bm{c}_p$.
		
	\ENDFOR
	
	\RETURN $\bm{a_\text{max}}$ 
	
\end{algorithmic}
\end{algorithm}

\end{minipage}

To increase the possibility of finding a global optimum, we can introduce a simulated annealing mechanism. Each time we find a better solution, with a probability of $1-\epsilon$, we move to the corresponding piece, and with a probability of $\epsilon$, we jump to a random piece that has not been searched. $\epsilon$ decreases with the iteration number.

% Suppose that we iteratively search configurations $\vc_{p_1}$, $\dots$, $\vc_{p_i}$, $\dots$ in our algorithm. Assume that we

% based on the following two inspiring thoughts: 
% We have to utilize some heuristic algorithms in order to make the priorities of the high-value pieces as high as possible and try to avoid wasting time on the non-existing and low-value pieces. 
% \begin{itemize}
%     \item Pieces with more "on" hidden node configurations are likely to generate higher joint Q values. We thus give them a higher priority to be searched, see \citet{rada2018new}.
%     \item According to Lemma 1, if a temporary best action generates through the mixing network a configuration that has not been searched, then utilizing Algorithm \ref{alg:MAX-SUM} on this configuration will output a better action.
% \end{itemize}

\textbf{Discussion about loopy graph topology}\ \ In this paper, we consider complete graphs when studying non-linear coordination graphs. A concern is that message passing algorithms like Max-Sum may not converge to the optimal solutions in loopy graphs and has an error rate of $e$. Lemma~\ref{lemma:1} is not affected because it is a property of LeakyReLU Networks. For Lemma~\ref{lemma:global_optimum}, the maximum of solutions found by message passing in all slope configurations is the global optimum with a probability of $1-e$. An error occurs when message passing cannot find the right solution on the piece where the global optimum is located. Our iterative method may stop earlier when message passing returns a wrong solution located in the current cell. The probability of this situation is less than $e$. Thus we have at least a probability of $(1-e)^{n}$ ($n$ is the number of iterations) to find the piece where the local optimum is located, and the final probability of finding the local optimum is larger than $(1-e)^{n+1}$.

% although we can no longer guarantee that the visited $Q_{tot}$ values are monotonically increasing, we can still guarantee that a better piece is found as long as $\vc_{real}\ne\vc_{p}$. In this way, we can still find the piece where the local optimum is located. Therefore, the error rate of our iterative method is still $\epsilon$. 

\section{Representational Capability}\label{sec:exp-matrix}
In this section, we compare the representational capacity of our model against conventional coordination graphs. The comparison is carried out on a two-step cooperative matrix game with four players and two actions. At the first step, Agent 1's action decides which of the two matrix games (Table.~\ref{tab:payoff_matrix}) to play in the next timestep. In the second step, the number of agents taking Action B determines the global reward received by the agent team (Table.~\ref{tab:payoff_matrix}).

\begin{table}[h]
    \centering
    \vspace{-1.5em}
    \begin{minipage}{0.4\textwidth}
        \centering
        \caption*{State 2A}
        \begin{tabular}{c|ccccc}
          \hline
           \# Action B & 0 & 1 & 2 & 3 & 4 \\ \hline
           Reward & 7 &  7 & 7 & 7 & 7\\ \hline
        \end{tabular}
    \end{minipage}
    \begin{minipage}{0.4\textwidth}
        \centering
        \caption*{State 2B}
        \begin{tabular}{c|ccccc}
          \hline
           \# Action B & 0 & 1 & 2 & 3 & 4 \\ \hline
           Reward & 0 &  -0.1 & 0.1 & 0.3 & 8\\ \hline
        \end{tabular}
    \end{minipage}
    \vspace{0.5em}
    \caption{Payoff matrices of the two-step game after Agent 1 chooses the first action. Action A takes the agents to State 2A, while action B takes them to State 2B. The team reward depends on the number of agents taking Action B.}
    \label{tab:payoff_matrix}
    \vspace{-1.5em}
\end{table}
We first theoretically prove that the representational capacity of conventional coordination graphs is unable to represent the $Q$-function of this game. Since the reward is invariant to the identity of agents, \ie, $R(a_1, a_2, a_3, a_4) = R(p(a_1,a_2, a_3, a_4))$, where $p$ is an arbitrary permutation, the learned action-value function should also be permutation invariant. Therefore we can ignore the order of actions in value functions. Now let's focus on State 2B ($s_{2B}$). Conventional coordination graphs need to solve a linear system consisting of five unknowns and five equations. However, for this system, the rank of the augmented matrix (4) is greater than the rank of the coefficient matrix (3). Details can be found in Appx.~\ref{appx:case_study}. Therefore, this system has no solution, and it is impossible for conventional CGs to learn a correct $Q$ function.

We then empirically demonstrate our idea. We train \name~and DCG on the task for 5000 episodes under full exploration ($\epsilon = 1$) and examine the learned value functions. Full exploration ensures that both methods explore all state-action pairs. In such a case, the representational capacity of the action-value function approximator remains the only challenge of learning accurate Q functions. For our algorithm, the utility and payoff function is fully connected networks with a single hidden layer of 64 units with a ReLU non-linearity. $\gamma$ is 0.99, and the replay buffer stores the last 500 episodes, from which we uniformly sample batches of size 32 for training. The target network is updated every 100 episodes. The learning rate of RMSprop is set to $5 \times 10^{-4}$. Agents receive the full state as observation, which is represented as an one-hot vector.

Table~\ref{tab:cg_q_matrix} and~\ref{tab:nl_q_matrix} show the learned $Q$ by DCG and \name. In line with our theoretical analysis, we can see that DCG learns a sub-optimal strategy of selecting Action A in the first step. By contrast, our method learns the accurate value of Action B in State A and get the optimal strategy. Furthermore, the Q values for State 2B learned by \name~is more accurate than those learned by DCG. These results demonstrate that \name's higher representational capacity allows it to accurately estimate the value function of this game whereas DCG cannot. We also note that such an example is common for task with few actions because there are typically more equations than unknowns.

\begin{table}[h]
\vspace{-1.5em}
    \begin{minipage}{0.4\textwidth}
            \centering
            \caption*{State A}
            \begin{tabular}{c|cc}
              \hline
                Action & A & B  \\ \hline
               Q & 6.91 &  5.85 \\ \hline
            \end{tabular}
    \end{minipage}
    \begin{minipage}{0.5\textwidth}
            \centering
            \caption*{State 2B}
            \begin{tabular}{c|ccccc}
              \hline
                \# Action B & 0 & 1 & 2 & 3 & 4 \\ \hline
               Q & 1.32 &  -0.89 & -0.77 & 1.41 & 5.89\\ \hline
            \end{tabular}
    \end{minipage}
    \vspace{0.5em}
    \caption{Q-functions learned by DCG for the matrix game.}
    \label{tab:cg_q_matrix}
    \vspace{-1.5em}
    \end{table}
    \vspace{-1em}
    \begin{table}[h]
    \hspace{-4em}
    \centering
        \begin{minipage}{0.4\textwidth}
            \centering
            \caption*{State A}
            \begin{tabular}{c|cc}
              \hline
                Action & A & B  \\ \hline
               Q & 6.92 &  7.95 \\ \hline
            \end{tabular}
        \end{minipage}
        \begin{minipage}{0.5\textwidth}
            \centering
            \caption*{State 2B}
            \begin{tabular}{c|ccccc}
              \hline
                \# Action B & 0 & 1 & 2 & 3 & 4 \\ \hline
               Q & 0.19 &  -0.12 & 0.20 & 0.63 & 8.02\\ \hline
            \end{tabular}
        \end{minipage}
    \vspace{0.5em}
    \caption{Q-functions learned by our method for the matrix game.}
        \label{tab:nl_q_matrix}
    \vspace{-1.5em}
    \end{table}

\begin{figure}[t]
  \centering
  \includegraphics[width=\linewidth]{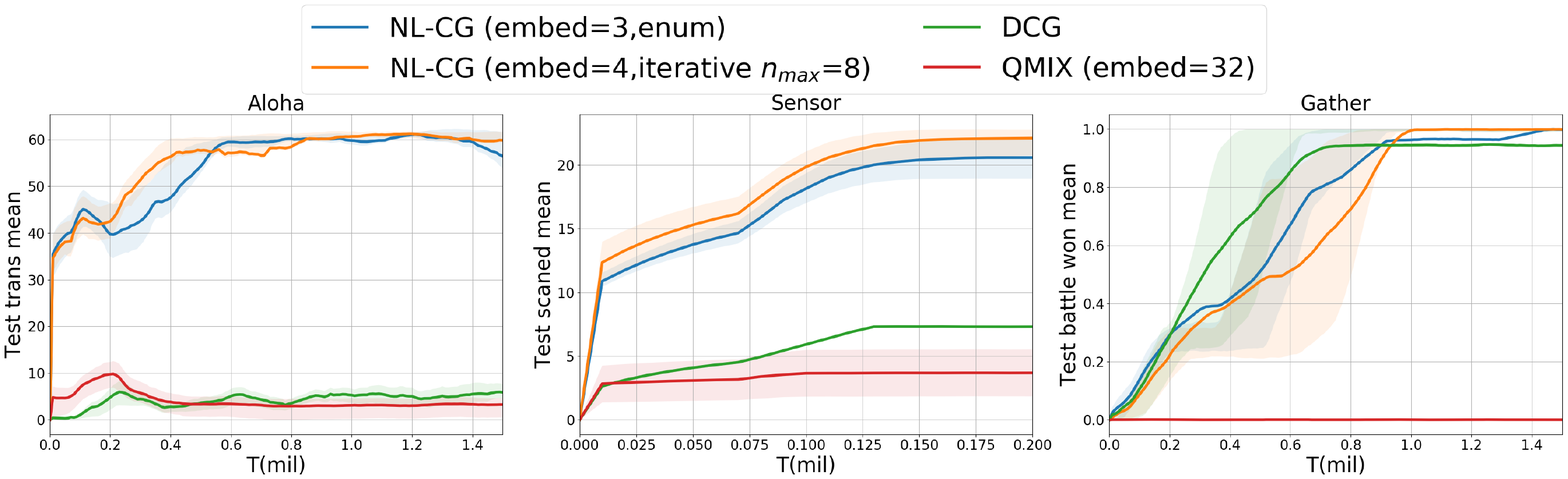}%curves_fig1.png
  \caption{Performance comparison against baselines on the MACO benchmark.}\label{fig:perf}
  \includegraphics[width=\linewidth]{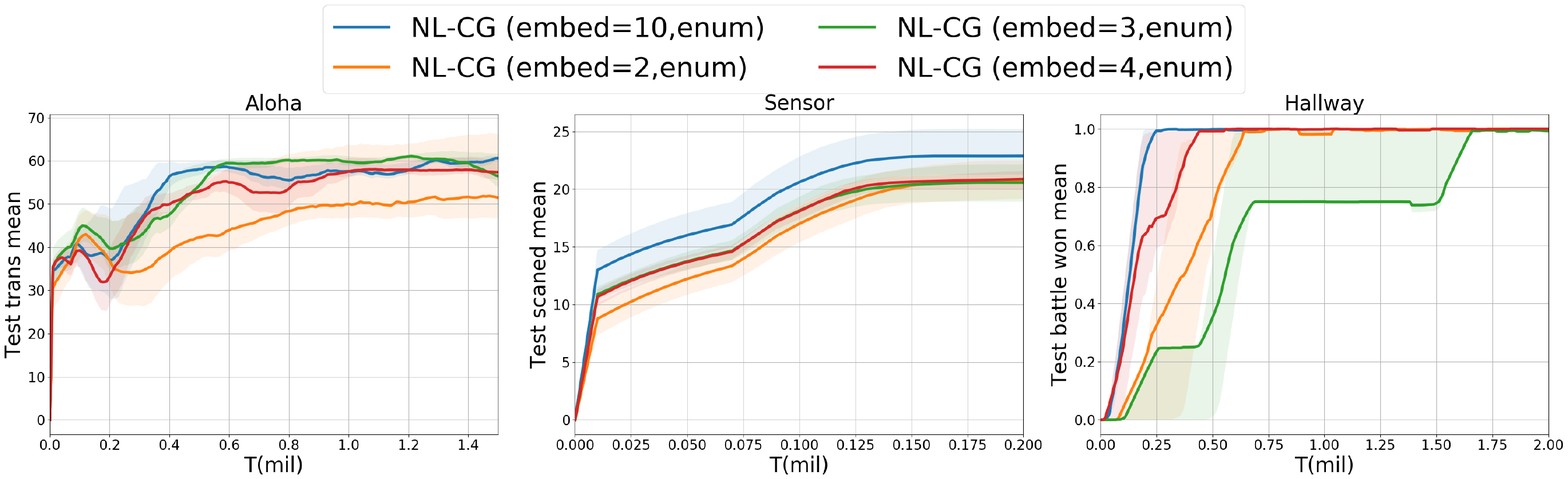}%curves_fig2.png
  \caption{Influence of the size of the mixing network. When the width of the hidden layer is 10, enumerating all linear pieces is quite time-consuming. We thus stop training when we observe its performance reaching or surpassing the best performance achieved by other algorithms.}\label{fig:width}
  \includegraphics[width=\linewidth]{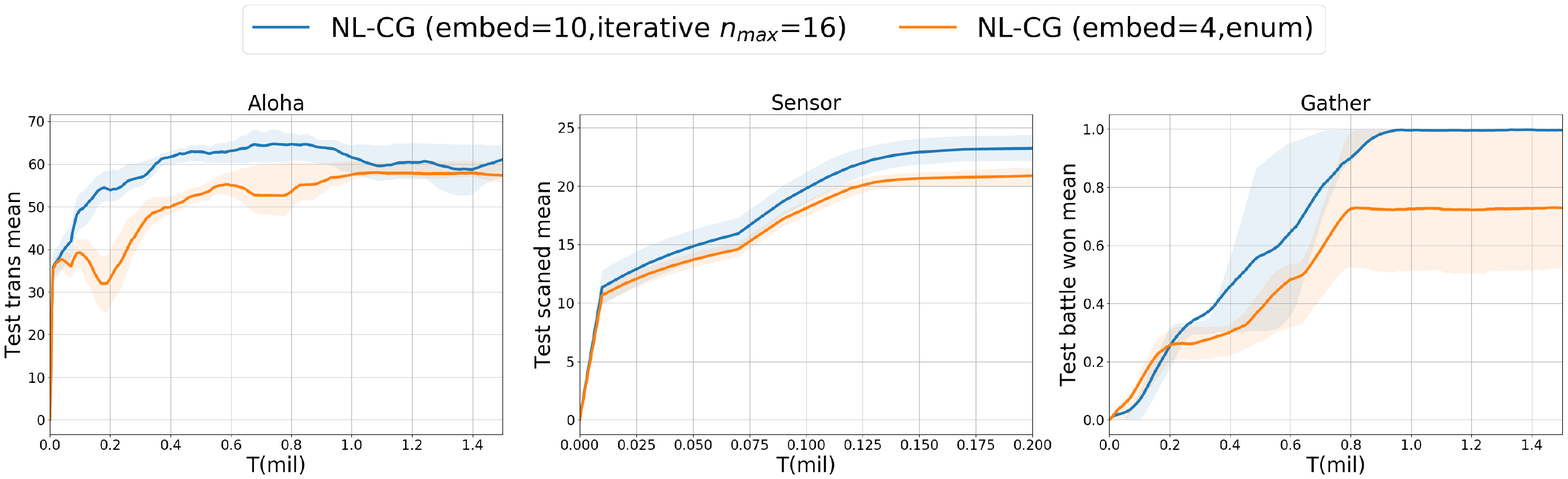}%curves_fig3.png
  \caption{Our iterative optimization method reduces \name's time cost and thus can use a larger mixing network, leading to better performance when checking the same number of linear pieces.}\label{fig:x}
  \vspace{-2em}
\end{figure}
\section{Experiments}\label{sec:exp-maco}
In this section, we conduct experiments to show the effectiveness of our method on complex tasks. We benchmark our method on the Multi-Agent COordination (MACO) benchmark~\citep{wang2021context}, which covers various classic coordination tasks in the literature of multi-agent learning and increases their complexity to better evaluate the performance of different algorithms. The MACO benchmark is characterized by a high demand on the sophistication of agent coordination. For \name, we use a mixing network that has one hidden layer with different widths. Detailed hyperparameter settings of our method can be found in Appendix~\ref{appx:hyper}. For fair comparison, we run all our experiments with 5 random seeds and show the mean performance with a $95\%$ confidence interval.

% \begin{figure}[h]
%   \centering
%   \includegraphics[width=0.49\linewidth]{pic/aloha.png}
%   \includegraphics[width=0.49\linewidth]{pic/sensor.png}
%   \\
%   \includegraphics[width=0.49\linewidth]{pic/gather.png}
%   \includegraphics[width=0.49\linewidth]{pic/hallway.png}
%   \caption{Performance comparison on the MACO benchmark.}
% \end{figure}

\begin{wrapfigure}[15]{r}{0.4\linewidth}
  \centering
  \includegraphics[width=1\linewidth]{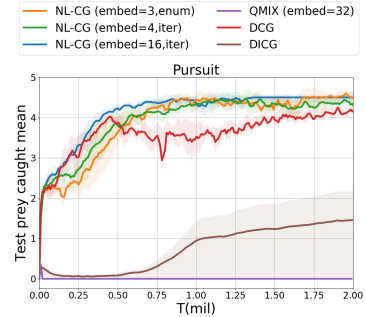}
  \caption{Performance on $\mathtt{Pursuit}$. 
  %For \name, We enumerate all slope configurations for embedding $3$, and iterate 8 and 2 rounds for the embedding of 4 and 16.
  }
  %(\textbf{Left})
  \label{fig:add}
\end{wrapfigure}
In Fig.~\ref{fig:perf}, we compare \name~against the state-of-the-art coordination graph learning method (DCG~\cite{bohmer2020deep}) and fully decomposed methods (QMIX~\cite{rashid2018qmix}, DICG~\cite{li2021deep}). For both DCG and our method, we use the complete graphs for all experiments in the paper. For \name, we (1) set the hidden width to $3$ and enumerate all configurations and (2) set the hidden width to $4$ and run our iterative method with $n_{max}$=$4$. We stop iteration when $n_{max}$ slope configurations are visited. The result shows that our algorithm can outperform conventional CGs significantly. Moreover, our iterative optimization method has comparable performance with the enumeration method, showing its effectiveness. QMIX struggles on these tasks, indicating that these tasks are beyond the representation capacity of a fully decomposed function.

We further investigate the the influence of the mixing network's width. Specifically, we test \name~with a width of 2, 3, 4, and 10 and compare their performance in Fig.~\ref{fig:width}. It can be observed that, generally, more units (10) in the hidden layer lead to better or at least equal performance than other configurations. This result is in line with our motivation: a powerful non-linear mixing network increases the capability of CGs.

In Fig.~\ref{fig:x}, we compare our enumerative and iterative methods. These two methods check the same number (16) of linear pieces, but the iterative method can use a hidden layer of $10$. As a result, the iterative method outperforms the enumerative method.

Additionally, in Fig.~\ref{fig:add}, we compare \name~against DCG~\cite{bohmer2020deep}, QMIX~\cite{rashid2018qmix}, and DICG~\cite{li2021deep} on $\mathtt{Pursuit}$. We find \name~generally outperforms previous methods.
%On $\mathtt{Pursuit}$, NL-CG with wider mixers use very few (20K-30K) samples to achieve DCG's performance after convergence.

\subsection{Analysis of the optimality and efficiency of the iterative optimization method}
\begin{figure}[h]
  \centering
  \includegraphics[width=\linewidth]{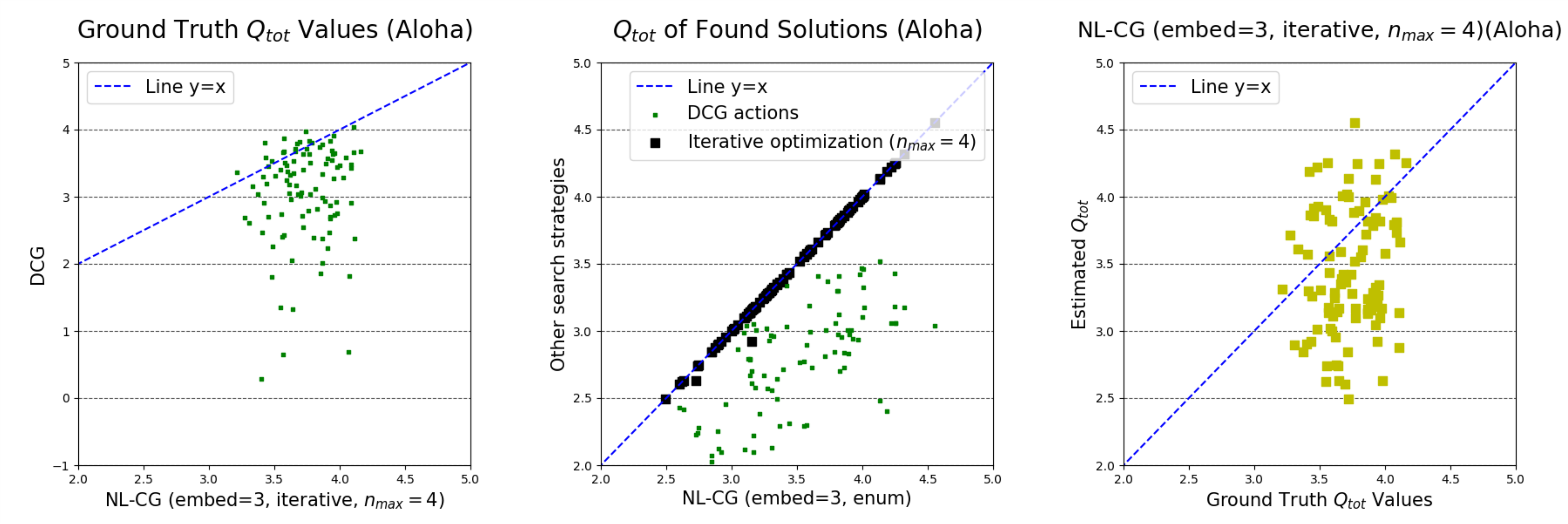}
  \caption{\textbf{Left}: Ground-truth Q value comparisons show that NL-CG learns a policy with higher value than DCG. \textbf{Middle}: Actions obtained by enumerative and iterative methods have similar values. \textbf{Right}: $Q_{tot}$ output of NL-CG against ground-truth Q value (averaged Monte Carlo returns).}
  \label{fig:opt}
\end{figure}
\begin{figure}[h]
  \centering
  \includegraphics[width=\linewidth]{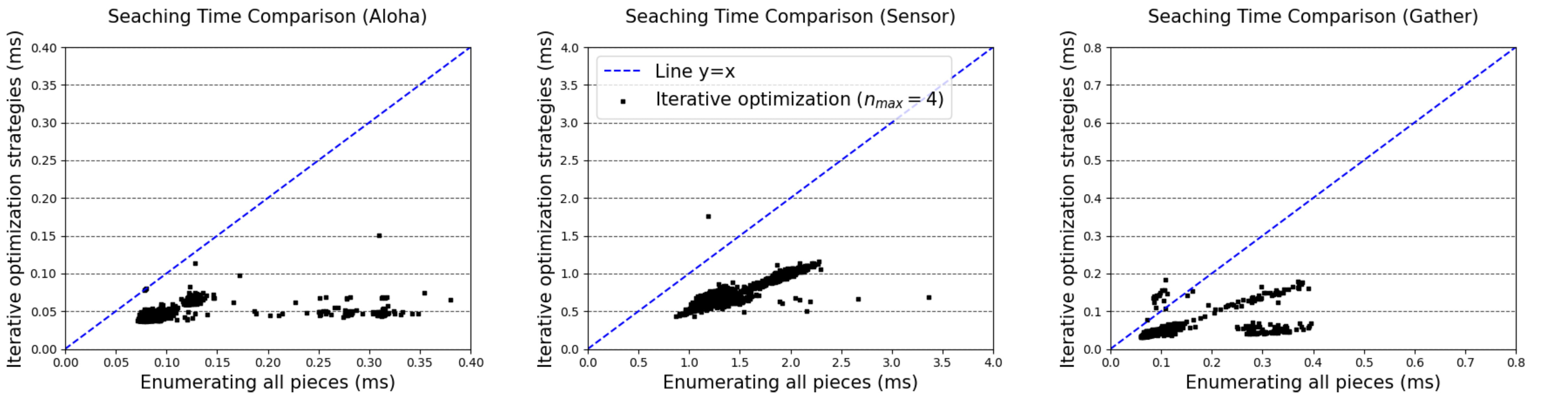}
  \caption{Efficiency of the iterative method. \textbf{x-axis}: Time spent by enumerating all pieces. \textbf{y-axis}: Time spend by the iterative method. The farther down the line $y=x$, the faster the method is. }
  \label{fig:time}
\end{figure}
% Optimality of the iterative optimization method. \textbf{x-axis}: $Q_{tot}$ of the solution found by enumerating all linear pieces. \textbf{y-axis}: $Q_{tot}$ of the solution found by other methods. If a method is near-optimal, its points will be located near the line $y=x$

Although we have checked the performance of our iterative optimization method, learning curves can not fully reveal its optimality and efficiency. In this section, we provide optimality and efficiency analyses by checking the action selection results and time costs in detail.

In Fig.~\ref{fig:opt}-middle, we compare the $Q_{tot}$ value of our enumerative and iterative optimization methods. We can see that the iterative method is near-optimal after checking only $4$ (embed=3) slope configurations: the Q-values of its actions are very close to those selected by enumeration and are better than those selected by DCG. In Fig.~\ref{fig:opt}-left, we compare the ground truth Q estimates of DCG and NL-CG (embed=3, iterative, $n_{max}=4$). The result shows that NL-CG learns a policy with higher value. In Fig.~\ref{fig:opt}-right, we compare the $Q_{tot}$ values estimated by NL-CG (embed=3, iterative, $n_{max}=4$) against ground truth Q estimates. The estimation errors on all tested state-action pairs are less than 20\%.

In Fig.~\ref{fig:time}, we compare the time spent by enumerative and iterative optimization methods. It can be found that the iterative method saves 50\% to 65\% of running time. We can thus conclude that the iterative optimization method provides a good trade-off between complexity and optimality.

%  Since, theoretically, enumerative optimization can find global optimum, points that are closer to the line $y=x$ indicate better solutions. We also note that, in practice, Max-Sum is an approximate algorithm on cyclic graphs, and enumeration cannot guarantee global optimality. Some actions selected by the iterative method are even better than those selected by enumeration. 

% We also find in Fig.~\ref{fig:add} that, when the iteration rounds increases during action selection, the average consumed time for each iteration is decreasing, probably because.???

%   \includegraphics[width=0.32\linewidth]{pic/iter_time.png}

\section{Conclusion}
In this paper, we extend coordination graphs beyond linear decomposition by introducing non-linear mixing networks. Experiments manifest its superior representation power on complex tasks that conventional CGs are not able to solve. An important research direction is the stability of non-linear CGs and to get rid of the non-negative constraint on weights of the mixing network. The authors do not see obvious negative societal impacts of our method. 
%Our code is freely available online. \footnote{\url{https://github.com/fringsoo/CGMIX}}

\bibliographystyle{unsrtnat}  % do not change this line!
% \balance  % do not change this line -- unless you manually balance your last page
\bibliography{neurips_2022}  % put name of your .bib file here

%%%%%%%%%%%%%%%%%%%%%%%%%%%%%%%%%%%%%%%%%%%%%%%%%%%%%%%%%%%%
\section*{Checklist}

\begin{enumerate}

\item For all authors...
\begin{enumerate}
  \item Do the main claims made in the abstract and introduction accurately reflect the paper's contributions and scope?
    \answerYes{}
  \item Did you describe the limitations of your work?
    \answerYes{}
  \item Did you discuss any potential negative societal impacts of your work?
    \answerYes{}
  \item Have you read the ethics review guidelines and ensured that your paper conforms to them?
    \answerYes{}
\end{enumerate}

\item If you are including theoretical results...
\begin{enumerate}
  \item Did you state the full set of assumptions of all theoretical results?
    \answerYes{}
        \item Did you include complete proofs of all theoretical results?
    \answerYes{}
\end{enumerate}

\item If you ran experiments...
\begin{enumerate}
  \item Did you include the code, data, and instructions needed to reproduce the main experimental results (either in the supplemental material or as a URL)?
    \answerYes{}
  \item Did you specify all the training details (e.g., data splits, hyperparameters, how they were chosen)?
    \answerYes{}
        \item Did you report error bars (e.g., with respect to the random seed after running experiments multiple times)?
    \answerYes{}
        \item Did you include the total amount of compute and the type of resources used (e.g., type of GPUs, internal cluster, or cloud provider)?
    \answerYes{}
\end{enumerate}

\item If you are using existing assets (e.g., code, data, models) or curating/releasing new assets...
\begin{enumerate}
  \item If your work uses existing assets, did you cite the creators?
    \answerYes{}
  \item Did you mention the license of the assets?
    \answerYes{}
  \item Did you include any new assets either in the supplemental material or as a URL?
    \answerYes{}
  \item Did you discuss whether and how consent was obtained from people whose data you're using/curating?
    \answerNA{}
  \item Did you discuss whether the data you are using/curating contains personally identifiable information or offensive content?
    \answerNA{}
\end{enumerate}

\item If you used crowdsourcing or conducted research with human subjects...
\begin{enumerate}
  \item Did you include the full text of instructions given to participants and screenshots, if applicable?
    \answerNA{}
  \item Did you describe any potential participant risks, with links to Institutional Review Board (IRB) approvals, if applicable?
    \answerNA{}
  \item Did you include the estimated hourly wage paid to participants and the total amount spent on participant compensation?
    \answerNA{}
\end{enumerate}

\end{enumerate}

%%%%%%%%%%%%%%%%%%%%%%%%%%%%%%%%%%%%%%%%%%%%%%%%%%%%%%%%%%%%

\newpage
\appendix

 \section{Hyperparameters and Infrastructure}\label{appx:hyper}
When comparing the performance of \name~against DCG~\citep{bohmer2020deep} and QMIX~\citep{rashid2018qmix} on the MACO benchmark, we adopted the following hyper-parameter settings. For all tasks, we used a discount factor $\gamma=0.99$ and $\epsilon$-greedy exploration, where $\epsilon$ was linearly decayed from $1$ to $0.05$ within the first 50,000 training time steps. The replay buffer stored the last 5000 episodes, from which we uniformly sampled batches of size 32 for training. The target network was updated every 200 episodes. The learning rate of RMSprop was set to $5 \times 10^{-4}$, except for the $\mathtt{Pursuit}$ experiment, which used $1 \times 10^{-2}$ for faster convergence. Every 10000 time steps we paused training and evaluated the model with 300 greedy test trajectories sampled with $\epsilon=0$.

Each agent processed its local action-observation history at each time step using a linear layer of 64 neurons, followed by a ReLU activation, a GRU of the same dimensionality, and finally a linear layer of $|A|$ neurons. The output served as an individual feature vector and was fed into the utility and payoff function. The parameters of the utility and payoff function were shared amongst agents, who were identified by a one-hot encoded ID in the input. All message passing procedures in \name~iterated for 4 rounds. The weights and bias of the mixing network in \name~were generated by a 2-layer hyper-network with a hidden layer of 64 ReLU neurons, except for the $\mathtt{Aloha}$ experiment, which uses one linear layer. 

Our implementation of \name~was based on the PyMARL\footnote{\url{https://github.com/oxwhirl/pymarl}}~\citep{samvelyan2019starcraft} framework. We used NVIDIA GeForce RTX 3090 GPUs for training and evaluation.

\section{Weighted Max-Sum}\label{appx:w-max-sum}
Algorithm~\ref{alg:PIECE-ENUMERATE-ACTION-SELECTION} and Algorithm~\ref{alg:HEURISTIC-ACTION-SELECTION} find global and approximate greedy actions for a non-linear coordination graph, respectively. Both of these two algorithms rely on the weighted Max-Sum algorithm to find local optimal action on a linear piece. We present the weighted Max-Sum algorithm in Algorithm~\ref{alg:MAX-SUM}.

\section{Representational Capability}\label{appx:case_study}

In Sec.~\ref{sec:exp-matrix}, we use a matrix game to show the representational capability of non-linear coordination graphs. For conventional coordination graphs on State 2B of this game, since the value function should be permutation invariant, there are five unknowns $q_1 = q_i(s_{2B}, A)$, $q_2 = q_i(s_{2B}, B)$, $q_{3} = q_{ij}(s_{2B}, AA)$, $q_{4} = q_{ij}(s_{2B}, AB)$, and $q_{5} = q_{ij}(s_{2B}, BB)$ and five equations: 

\begin{equation}
\systeme{
4q_{1}          + 6q_{3}                   = 0,
3q_{1} +  q_{2} + 3q_{3} + 3q_{4}          = -0.1,
2q_{1} + 2q_{2} +  q_{3} + 4q_{4} +  q_{5} = 0.1,
 q_{1} + 3q_{2} +          3q_{4} + 3q_{5} = 0.3,
       + 4q_{2}                   + 6q_{5} = 8
}    
\end{equation}

The augmented matrix of this system has a higher rank (4) than its coefficient matrix (3). Therefore, this linear system does not have a solution, which means that a conventional coordination graph cannot represent the accurate value function for this task.

\begin{algorithm}[H]
\caption{$w$-MAX-SUM\\/*Greedy action selection with $k$ message passing for one linear piece of the mixing network $f_m$. This algorithm is called by Algorithm~\ref{alg:PIECE-ENUMERATE-ACTION-SELECTION} and~\ref{alg:HEURISTIC-ACTION-SELECTION}, and the definition of inputs can be found there.*/} 
\label{alg:MAX-SUM}
\begin{algorithmic}
    \STATE {\bfseries Input:} 
         $\bm{f}^\text{V}\in\mathbb{R}^{|\mathcal V| \times A}$, $\bm{f}^\text{E}\in\mathbb{R}^{|\mathcal E| \times A \times A}$,
	    $(\bm{W}_{\mathcal V}$,$\bm{W}_{\mathcal E}) \in \mathbb{R}^{|\mathcal V|+|\mathcal E|}$,
	    $b \in \mathbb{R}$
		
% 		\COMMENT{the sum here is in fact a weighted sum with bias}
	 
	\STATE $\bm{f}^\text{V} := \bm{W}_{\mathcal V} \circ \bm{f}^\text{V}$
	
	\STATE $\bm{f}^\text{E} := \bm{W}_{\mathcal E} \circ \bm{f}^\text{E}$
	
	\STATE $\bm{\mu}^0, \bm{\bar{\mu}}^0 := \bm{0} \in \mathbb{R}^{|\mathcal E| \times A}$
	
		\COMMENT{Initialize forward messages ($\bm{\mu}$) and backward messages ($\bar{\bm{\mu}}$).}
	
	\STATE $\bm{q}^0 := \bm{f}^\text{V}$
	
		\COMMENT{Initial ``Q-value''.}
	
	\STATE $q_\text{max} := -\infty; 
			  \bm{a_\text{max}} := \big[\argmax\limits_{a \in \mathcal A^i} q^0_{ai} \,\big |\, i \in \mathcal V \big]$
			  							  
		\COMMENT{Initialize the best found solution.}
	
 	\FOR{ $t \;\in\; \{1, \ldots, k\}$ }
 		
 		\STATE  \COMMENT{$k$ rounds of message passing.}
		
		\FOR{ $e = (i,j) \;\in\; \mathcal E$ }
	\STATE  \COMMENT{Update forward and backward messages. Subscripts of $\vf$, $\vq$, $\bm{\mu}$ mean indexing.}
	
			\STATE $\bm{\mu}_e^{t} := \max\limits_{a \in \mathcal A^i} 
					\big\{ (q_{ia}^{t-1} \!- \bar{\mu}_{ea}^{t-1}) 
					+ \bm{f}^\text{E}_{ea} \big\}$
					
			\COMMENT{Forward: maximize sender.}
			
			\STATE $\bar{\bm{\mu}}_e^{t} := \max\limits_{a \in \mathcal A^j} 
					\big\{ (q_{ja}^{t-1} \!- {\mu}_{ea}^{t-1}) 
					+ (\bm{f}^\text{E}_{e})^{\!\top}_{a} \big\}$
			
			\COMMENT{Backward: maximize receiver.}
			%\IF{ \texttt{message\_normalization} }
			    
			%    \STATE  \COMMENT{to ensure converging messages}
				
				\STATE $\bm{\mu^{t}_e} \leftarrow \bm{\mu^{t}_e} 
						- \frac{1}{|\mathcal A^j|} \sum\limits_{a \in \mathcal A^j} 
							\mu^t_{ea}$
				
				\COMMENT{Normalize forward messages to ensure converging.}
								
				\STATE $\bar{\bm{\mu}}^{t}_e \leftarrow \bar{\bm{\mu}}^{t}_e 
						- \frac{1}{|\mathcal A^i|} \sum\limits_{a \in \mathcal A^i} 
							\bar\mu^t_{ea}$

					\COMMENT{Normalize backward messages to ensure converging.}
			
			%\ENDIF
		
		\ENDFOR
		
		\FOR{$i \;\in\; \mathcal V$}
		
		    \STATE	\COMMENT{Update ``Q-value'' with messages.}
		
			\STATE $\bm{q}_i^t := \bm{f}^\text{V}_i 
					\;+\; \hspace{-4mm} 
						{\displaystyle\sum_{e = (\cdot,i) \in \mathcal E}} 
						\hspace{-3mm} \bm{\mu}^t_e
					\;+\; \hspace{-4mm}
						{\displaystyle\sum_{e = (i,\cdot) \in \mathcal E}} 
						\hspace{-3mm} \bar{\bm{\mu}}^t_e$
				
				\COMMENT{Utility plus incoming messages.}
			
			\STATE $a^t_i := \argmax\limits_{a \in \mathcal A^i} \{ q^t_{ia} \}$
			
				\COMMENT{Select the greedy action of agent $i$.}
		
		\ENDFOR
		
		\STATE $q' \leftarrow \sum_{i=1}^{|\mathcal V|} \bm{f}^\text{V}_{a^i} + \sum_{(i,j) \in \mathcal E}{} \bm{f}^\text{E}_{a^i\!a^j} + b$ 
			
			\COMMENT{Get the Q-value of the greedy action.}

		\IF{$q' > q_\text{max}$}
		    
		    \STATE  $\{ \bm{a}_\text{max} \leftarrow \bm{a}^t; \;\; q_\text{max} \leftarrow q'\;\; \bm{q} \leftarrow \bm{q}_i^t \}$
			
			    \COMMENT{Remember the best action.}
	
		\ENDIF
	
 	\ENDFOR
 	
 	\RETURN \;$q_\text{max}, \bm{a}_\text{max} \in \mathcal A^1 \times \ldots \times \mathcal A^{|\mathcal V|}, \bm{q}$
 	
 	 \COMMENT{Return the maximum $Q$ value, the corresponding action, and utilities/payoffs.}
\end{algorithmic}
\end{algorithm}

\end{document}